\def\BibTeX{{\rm B\kern-.05em{\sc i\kern-.025em b}\kern-.08em
    T\kern-.1667em\lower.7ex\hbox{E}\kern-.125emX}}
\newtheorem{theorem}{Theorem}
\newtheorem{corollary}{Corollary}
\begin{document}
\title{Compressibility of Network Opinion and Spread States in the Laplacian-Eigenvector Basis}
\author{Sandip Roy and Mengran Xue 
\thanks{This work was partially supported by NSF Grant 1635184.}
\thanks{S. Roy is with School of Electrical Engineering and Computer Science, Washington State University, Pullman, WA 99164, USA. M. Xue is with Raytheon BBN Technologies, Columbia, MD. (emails: sandip@wsu.edu, morashu@gmail.com)}}

\maketitle

\begin{abstract}
 Opinion-evolution and spread processes on networks (e.g., infectious disease spread, opinion formation in social networks) are not only high dimensional but also volatile and multi-scale in nature.  In this study, we explore whether snapshot data from these processes can admit terse representations.  Specifically, using three case studies, we explore whether the data are compressible in the Laplacian-eigenvector basis, in the sense that each snapshot can be approximated well using a (possibly different) small set of basis vectors.  The first case study is concerned with a linear consensus model that is subject to a stochastic input at an unknown location; both empirical and formal analyses are used to characterize compressibility.  Second, compressibility of state snapshots for a stochastic voter model is assessed via an empirical study.  Finally, compressibility is studied for state-level daily COVID-19 positivity-rate data.  The three case studies indicate that state snapshots from opinion-evolution and spread processes allow terse representations, which nevertheless capture their rich propagative dynamics. 
\end{abstract}

\section{Introduction}

There has been an extensive effort to model
opinion-evolution and spread processes in networks, which spans social sciences, natural sciences, and engineering communities \cite{friedkin,prosurnikov,voter1,sars,ozdagler,pare,preciado}.  These studies postulate local update rules for agents' opinions or statuses, and seek to characterize emergent network-wide properties of the opinion-evolution/spread (e.g., attractivity to a manifold, amplification rate, settling time).  In parallel, empirical analyses of network opinion-evolution and spread data have been undertaken \cite{socialdata1,socialdata2,virusdata1}.  These efforts using field data clarify that opinion/spread processes exhibit sophisticated and volatile behaviors at multiple scales (e.g. stochastic and heterogeneous evolution rules, manipulative behaviors), which need to be accounted for in models and data analysis.  

The profound interconnectivity of modern human society, both with respect to cyber interactions and direct physical contact, is necessitating analysis of opinion and spread processes which have extremely high dimension.  For example, online social networks include tens-of-millions to billions of users, with tens-of-thousands of users sometimes communicating on a single topic during a span of a few minutes.  Modeling or data analysis for even a single trending topic therefore may require consideration of a high-dimensional process.  Analogously, as the COVID pandemic has highlighted, infectious disease spread and management may have global scale, while also requiring understanding of interactions at highly localized scales \cite{covid1}. 

The extreme dimensionality of opinion-evolution and spread processes motivates the development of terse descriptions, including reduced-order models and reduced representations of process data.  Indeed, reduced-order modeling of opinion-consensus dynamics has been pursued recently in the controls community \cite{consensusred1,consensusred2}.  In parallel, there have been numerous studies on reduced-dimension representation of social-network data, using techniques such as principal component analysis and factor analysis \cite{socialred1}.    These techniques are based on the concept that process states of interest are largely constrained to a fixed low-dimensional manifold within the full state space, which therefore allows model reduction and data projection.  

While it is appealing to approximate opinion and spread processes using fixed low-dimensional bases, the volatility and multi-scale dynamics of these processes may frustrate such approximations.  For example, the pattern of COVID-19 prevalences in geographical areas within the United States (e.g., states or counties) has shown considerable variation over time due to changing drivers and complex local spread processes; in consequence, it is unlikely that prevalence patterns throughout the epidemic can be approximated in a single low-dimensional basis.  Likewise, social interactions governing decision-making and voting processes, as well as opinion-diffusion in social networks, are highly stochastic and yield rich dynamics that cannot easily be projected on a common low-dimensional basis.

Even when network opinion and spread processes are not representable in a fixed low-dimensional basis, the state at each time may have considerable structure, exhibiting correlation, periodicity, or other features.  Moreover, the structure in the state may have a close connection with the graph topology of the network.  The purpose of this work is to examine whether state snapshots admit terse representations which capture this structure, while allowing for the inherent variability and sophisticated multi-scale dynamics of the processes.  

The notion of {\em compressibility}, which was developed in the signal-processing community for sparse signal reconstruction,  provides an interesting framework for terse representation of network process snapshots \cite{compressive1,compressive2}. In this literature, a signal ensemble is viewed as compressible, if it admits a sparse approximation in a particular full-dimensional basis -- however, one that may differ from one ensemble member to another.  For example, a class of images is compressible in the Fourier basis, if each image of this type has only a few frequency components, albeit perhaps different ones.  
In this study, we explore whether the state snapshots of network opinion and spread process are compressible, in the sense that each snapshot admits a (possibly-different) sparse representation in a particular full-dimensional basis.    We particularly focus on whether these processes are compressible in the {\em Laplacian-eigenvector basis}, defined by the right eigenvectors of the Laplacian matrix associated with the network's graph.

The article is focused around three case studies, which are used to explore whether model-generated and also field data of opinion/spread processes are compressible in the Laplacian-eigenvector basis.  To describe the case studies, we first define compressibility notions for network opinion/spread processes, and develop the Laplacian-eigenvector basis (Section II).  Then, in the first case study, we characterize compressibility of an opinion-consensus process that is subject to a stochastic input, using both empirical analyses of model-generated data and some formal analyses (Section III).  The second case study is focused on empirical analysis of compressibility for data from a stochastic voter model (Section IV).  Finally, in Section V, compressibility of COVID-19 field data is explored.  Outcomes of the cases studies and possible applications of compressibility are briefly summarized in Section VI.

\section{Definitions and Aims}


Compressibility of snapshot data (i.e., data at a particular time point) from a network opinion-evolution or spread process is studied.  Specifically, a data vector ${\bf x} \in R^N$, which captures the state of a network process at a particular time, is considered.  Each entry $x_i$ of ${\bf x}$ represents the opinion or {\em status} of a {\em node} $i$ in the network.  Broadly, ${\bf x}$ is viewed as being drawn from an ensemble ${\cal X}$, which encompasses state snapshots at different times or for different instantiations.

A weighted digraph with $N$ vertices, corresponding to the $N$ nodes in the network, is used to represent interactions or influences in the network.  Specifically, an edge is drawn from vertex $i$ to vertex $j$ to indicate that the node $i$ directly influences node $j$ in the evolution of the opinion/spread state.  For an edge from vertex $i$ to vertex $j$, a weight $w_{ij}>0$ is assigned which indicates the strength of the interaction.  

Exact and approximate notions of compressibility are defined for individual state snapshot data ${\bf x}$ and for the data ensemble ${\cal X}$, following on the definitions used in the compressive sensing literature \cite{compressive1,compressive2,compressive3,compressive4}.
These notions are defined with respect to a specific basis for $R^N$,  which we specify as the columns of a real $N \times N$ matrix $\Phi$.  The state snapshot ${\bf x}$ is defined to be exactly $K$-compressible with respect to $\Phi$, if the state snapshot ${\bf x}$ can be expressed as ${\bf x}=\Phi {\bf s}$, where at most $K$ entries of the real vector ${\bf s}$. If each vector ${\bf x} \in {\cal X}$ is exactly $K$-compressible, then the ensemble ${\cal X}$  is also referred to as exactly $K$-compressible.  

In many circumstances, exact compressibility is not achieved, but each state snapshot can be approximated well as a sparse combination of basis vectors.  The accuracy of a sparse approximation can naturally be measured in terms of the expected signal energy fraction captured by the approximation. In particular, the energy fraction $F$ of a $K$-sparse approximation $\overline{\bf x}= \Phi {\bf s}$ of ${\bf x}$, where ${\bf s}$ has $K$ non-zero entries, is defined as:
\begin{equation}
F=1-\frac{||{\bf x}-\overline{\bf x}||_2^2}{||{\bf x}||_2^2},
\end{equation}
where the subscript indicates the $2$-norm of the vector.

Also of interest is the maximum energy fraction among all $K$-sparse approximations:
\begin{equation}
    F^*=\max_{{\bf s} \, s.t. ||{\bf s}||_0 = K} F,
\end{equation}
where $||{\bf s}||_0$ indicates the number of non-zero entries.  We refer to $F^*$ as the optimal energy fraction for a $K$-sparse approximation, and the argument ${\bf s}={\bf s}^*$ that optimizes the energy fraction as the optimizing component vector (where we use the term `component' because ${\bf s}$ contains the components in the basis directions forming the approximation).  In the case where $\Phi$ is an orthonormal basis, it is easy to see that the optimizing component vector ${\bf s}^*$ can be found by first computing ${\bf s}=\Phi^{-1} {\bf x}$, and then setting to zero all except the $K$ largest-magnitude entries in ${\bf s}$.  For the general case, this approximation is not necessarily the optimal one, however maintaining the large-magnitude entries provides a good approximation (with provable performance relative to the optimal if the angles between basis vectors are lower-bounded).  Finally, the average value of the optimal energy fraction over the ensemble ${\bf X}$ (provided that the ensemble is stochastic), and/or the extremal values over the ensemble, may be of interest.


The exact compressibility of  a signal, as well as the energy fraction captured by a sparse approximation, are dependent on the basis $\Phi$ used for compression.  Idealized models for many opinion and spread processes are defined by the Laplacian matrix associated with the network's graph, and hence it is natural to assess compressibility in bases obtained from the Laplacian matrix.  In particular, the natural responses of deterministic linear opinion and spread models are sometimes primarily governed by dominant modes, which are aligned with certain right eigenvectors of the Laplacian matrix.  Thus, one might expect that data from real-world spread and opinion processes, or data generated from stochastic models of opinion formation/spread, may be compressible in the Laplacian-eigenvector basis -- i.e., state snapshots may primarily have components along a few dominant eigenvector directions, albeit in a time-dependent or instance-dependent fashion.  With this motivation in mind, here we study compressibility of opinion and spread dynamics in the Laplacian-eigenvector basis.  

Formally, the $N\times N$ Laplacian matrix $L$ of the digraph $\Gamma$ is defined as follows.  The off-diagonal entry of $L$ at row $j$ and column $i$ is set to $-w_{ji}$ if the graph has an edge from vertex $i$ to vertex $j$, and is set to $0$ otherwise.  Meanwhile, the diagonal entries are selected so that each row of the Laplacian matrix sums to $0$.  

The eigenvalues of the Laplacian matrix $L$ are known to lie in the closed right half plane, with all eigenvalues on the $j\omega$-axis located at the origin and non-defective.  We use the notation $0=\lambda_1,\hdots, \lambda_{N}$ for the eigenvalues of $L$.  We also define the matrix $V=\begin{bmatrix} {\bf v}_1, \hdots, {\bf v}_{N} \end{bmatrix}$ to contain the corresponding right eigenvectors (where each eigenvector is normalized to unit two-norm).  In the case where $L$ has complex eigenvalues, the corresponding eigenvectors (and generalized eigenvectors) are complex-conjugate pairs; in this case, we replace the conjugate vectors in $V$ with their real and imaginary parts. For the compressibility analysis, we focus on the case that the sparsifying basis is $\Phi=V$.

We note that compression of data in the Laplacian-eigenvector or graph-spectrum basis has recently been considered in the signal-processing literature for the purpose of compressive sensing \cite{graphcompress}, however to the best of our knowledge the compressibility of network process data has not been considered.

Our primary aim in this study is to explore whether the rich dynamics exhibited in network opinion-evolution and spread processes are compressible in the Laplacian-eigenvector basis.  Thus, in a deviation from the standard presentation in controls-engineering articles, we primarily focus on empirical assessment of compressibility in several examples or case studies.  In particular, three case studies are pursued: 1) data generated from a linear opinion-consensus model which is subject to a stochastic input at an unknown location; 2) data generated from a voter model; and 3) field data of daily COVID positivity rates in U.S. states.    An initial formal treatment of compressibility is also undertaken for the linear opinion-consensus model.

\section{Linear Opinion-Consensus Model}

Models for opinion consensus with stubborn or manipulative nodes (agents) have been widely studied \cite{ozdagler,pirani,koorehdavoudi}.  When opinion-consensus processes are impacted by such actors, the agents' states may not reach consensus, and in fact may continuously vary. 
Here, we consider a canonical discrete-time linear opinion consensus model, and augment it to include an additive stochastic input at an unknown node. Specifically, the consensus model is of the form:
\begin{equation}
    {\bf x}[k+1]=A{\bf x}[k]+Bu[k], \label{eq:gauss}
\end{equation}
where ${\bf x}[k] \in R^N$ contains the statuses of the $N$ nodes at time $k$, $A=[a_{ij}]$ is a row-stochastic matrix, $B$ is a $0--1$ indicator vector, and $u[k]$ is a zero-mean unit-variance Gaussian white noise process.  The entry of $B$ which is equal to $1$ (indicating the node where the input is applied), say the $z$th entry, is unspecified.  The model is representative of a network opinion-evolution process that is subject to manipulation at one node, which is unknown to network analysts.  Scenarios of this sort are common in social-network processes, where sources of manipulation or opinion-modification may be hidden.  Models of this form also can capture physical-world diffusion processes that are subject to unknown or stochastic drivers.  Many opinion-evolution and diffusion models of this type are high dimensional, and hence terse representations of noisy process data are desirable.

An $N$-vertex digraph $\Gamma$  is defined to capture the direct influence between nodes in the consensus model.  An edge is drawn from vertex $i$ to vertex $j$ ($i \neq j$) in $\Gamma$ if $a_{ji}>0$, and assigned a weight of $a_{ji}$.  We assume that the graph $\Gamma$ is strongly connected.  In this case, absent a driving input, the statuses of the nodes would reach a common value (consensus); however, with the stochastic input, the state varies with time. We note that the Laplacian matrix $L$ of the graph is closely related to the state matrix $A$ of the model, as $A=I-L$.

Our interest lies in determining the compressibility of state snapshots ${\bf x}={\bf x}[k]$ in the Laplacian-eigenvector basis.  Compressibility metrics (energy fractions) for individual snapshots, as well as for the stochastic ensemble at a particular time $k$, are studied.  An empirical analysis is undertaken for an example and then some initial formal analyses are pursued, which illustrate that the data is compressible.

\subsection{Empirical Analysis}

A network with $200$ nodes is considered.  The network's graph is formed by placing vertices in the unit plane, and connecting vertices within a certain radius (Figure \ref{fig:1}).  The edge weights are first set to a common value, and then the incoming weights into each vertex are scaled to sum to $0.8$ (hence the diagonal entries in $A$ equal 0.2).   A single node is selected randomly (with equal probability) as the location of the stochastic input.  The network instantiation is then simulated assuming a zero initial condition to generate state process data.

\begin{figure}[!htb]
\centering
\includegraphics[width=9.5cm,height=8cm]{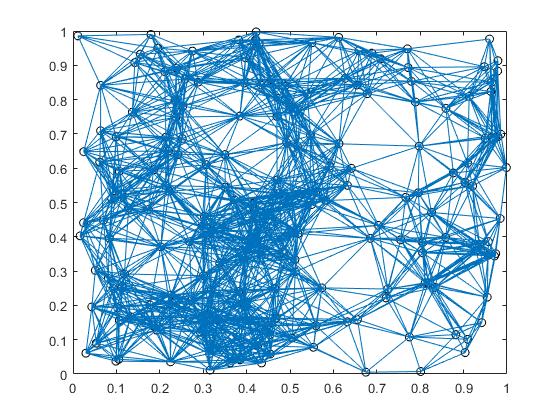}
\caption{A $200$ node graph is shown. Both linear-consensus dynamics (Section III.A) and voter dynamics (Section IV) were simulated on this graph. }
\label{fig:1}
\end{figure} 

Compressibility of the state snapshot ${\bf x}={\bf x}[k]$ at a particular time ($k=400$) in the Laplacian-eigenvector basis is considered.   Specifically, a $K$-sparse approximation of the state snapshot is determined by finding ${\bf s}=\Phi^{-1} {\bf x}[k]$, and setting to zero all but the $K$ largest entries.  The energy fraction $F$ is computed.  Also, the basis vectors (eigenvectors) with largest components in the $K$-sparse approximation are tabulated.

\begin{figure}[!htb]
\centering
\includegraphics[width=9cm,height=6.5cm]{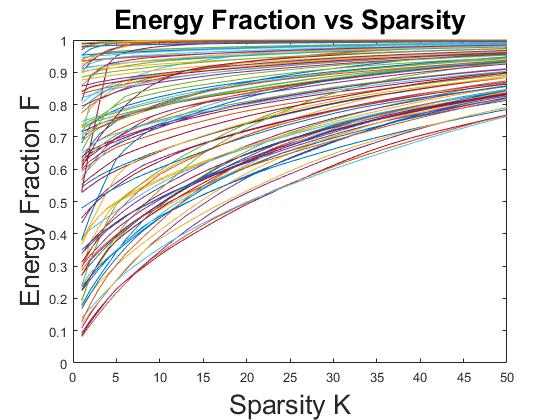}
 \includegraphics[width=9cm,height=6.5cm]{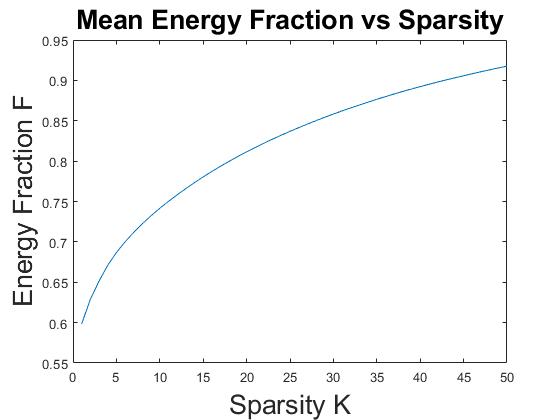}
\caption{For the linear consensus model, the energy fraction achieved by a $K$-sparse approximation is shown as a function of $K$, for each snapshot in the ensemble (top) and also averaged over the ensemble (bottom).  }
\label{fig:2}
\end{figure} 

Figure \ref{fig:2} shows the energy fraction for a $K$-sparse approximation of the state snapshot as a function of $K$, for an ensemble of simulations. The average energy fraction across the ensemble is also shown. The plots show that approximately $60\%$ of the energy is captured in the first basis vector on average, with $80\%$ of the total energy captured in the first 20 of the 200 basis vectors (equivalently, $50\%$ of the remaining energy beyond the first basis vector).  Approximately $90\%$ of the energy is captured in the first 40 basis vectors.  The energy fractions for the individual state snapshots show some variability, however for almost all ensemble members $50\%$ of the energy is captured using 20 basis vectors, and $70\%$ is captured using 40 basis vectors.

It is instructive to compare the dominant basis directions (the ones with the largest-energy components) for different ensemble members, both in the case that the input node is randomly varied and the case that the input node is held fixed.  As an example, for one simulation with the stochastic input at node $1$, we find that the indices of the first four dominant basis directions and corresponding Laplacian eigenvalues are: 1 ($\lambda=0$), 2 ($\lambda=0.04$), 19 ($\lambda=0.56)$, and 3 ($\lambda=0.05$).  Meanwhile, for another simulation with the same input location, they are 1 ($\lambda=0$),   167 ($\lambda=0.79$), 180 ($\lambda=0.80$), and 19 ($\lambda=0.56$).  Meanwhile, for a different input node, they are: 2 ($\lambda=0.04$), 3 ($\lambda=0.05$), 1 ($\lambda=0$), and 166 ($\lambda=0.79$).  The examples suggest that the dominant basis directions change among the ensemble members, even though a large energy fraction is captured within a few basis vectors.  Certain directions appear as dominant basis vectors with some frequency, particularly ones corresponding to eigenvalues near zero which are known to have network-wide span \cite{widearea}.  However, these basis directions are not always present, and other basis vectors appear as dominant ones depending on the ensemble member and input location.  The simulations suggest that state snapshots are compressible, but the dominant basis directions vary among the ensemble members.




\subsection{Statistical Analysis}

An initial statistical analysis of the linear opinion-consensus model (\ref{eq:gauss}) is undertaken, which gives some insight into compressibility of state snaphsots in the Laplacian-eigenvector basis.  The statistical analysis also suggests an alternate basis which allows for further compression, but requires some additional knowledge about the driving input.  

To simplify the presentation, we assume that eigenvalues of $A$ are real and simple.  This case encompasses state matrices that are symmetric or diagonally symmetrizable, which have been widely considered for network opinion-formation models.  The specialization is not essential for compressibility, however is simplifies the technical analysis.  For the analysis, we also assume that the matrix $A$ is aperiodic in addition to irreducible (which is a consequence of $\Gamma$ being strongly connected).

For the formal analysis, it is convenient to order the eigenvalues of $A$.  Specifically, we label the $N$ eigenvalues of $A$ as $1=\widehat{\lambda}_1 > \widehat{\lambda}_2 > \hdots >\widehat{\lambda}_{N} > -1$, where we have used the fact that $A$ has a simple strictly dominant eigenvalue at $1$ since it is irreducible and aperiodic.  The corresponding eigenvectors are labeled as $\widehat{\bf v}_1,\hdots, \widehat{\bf v}_{N}$.  

Importantly, the eigenvalues and eigenvectors of $A$ are closely related to those of the Laplacian matrix $L$ associated with the network's weighted digraph $\Gamma$.  Specifically, noticing that $A=I-L$, it follows that 
$\widehat{\lambda}_i=1-\lambda_i$, where the eigenvalues $\lambda_i$ of the Laplacian matrix satisfy $0 =\lambda_1 < \lambda_2 < \hdots < \lambda_{N} <2$.  Additionally, the eigenvectors of $A$ are identical to those of $L$, i.e. $\widehat{\bf v}_i={\bf v}_i$.  For the analysis, the vectors in the Laplacian-eigenvector basis are ordered in a commensurate fashion, i.e. $V=\begin{bmatrix} {\bf v}_1 & \hdots & {\bf v}_{N} \end{bmatrix}$.  Also, the matrix $W=[w_{ij}]$ is defined as $W=V^{-1}$.  The rows of $W$ are the left eigenvectors of $L$, which are also the left eigenvectors of $A$.

To assess compressibility, it is convenient to express the state ${\bf x}[k]$ of the linear-consensus process in the Laplacian-eigenvector basis.  In general, the state can be written as ${\bf x}[k]=V {\bf s}[k]$, where the vector ${\bf s}[k]$ lists the components in each basis direction.  Compressibility is essentially related to the squared magnitudes of the entries in ${\bf s}[k]$, which are statistical quantities.  Hence, we are interested in characterizing the second moments of the vector ${\bf s}[k]$, particularly in the large-$k$ asymptote.  The following result gives an explicit expression for the statistics of ${\bf s}[k]$:

\begin{theorem}
\label{th:1} Consider the opinion-consensus model in the case where the eigenvalues of $A$ are real and simple.  Consider the vector ${\bf s}[k]=V^{-1} {\bf x}[k]$, which indicates the components of the time-$k$ state ${\bf x}[k]$ in the Laplacian-eigenector basis.  The vector ${\bf s}[k]$ is Gaussian, with zero mean.  Also, in the limit of large $k$, the second moment of ${\bf s}[k]$ approaches:
\begin{equation} 
\Sigma=E({\bf s}[k]{\bf s}[k]^T)=QCQ. \label{eq:cov}
\end{equation}
Here, $Q$ is a diagonal matrix with $i$th diagonal entry given by the entry $w_{iz}$ in the left-eigenvector matrix $W$, where $z$ is the (unknown) location of the stochastic input in the opinion-consensus model.  Meanwhile, the entries of the matrix $C=[c_{ij}]$ are the following:
\begin{eqnarray}
& & c_{11}=k \\
& & c_{1j}=c_{j1}=\frac{1}{\lambda_j} \quad j=2,\hdots, n \nonumber \\
& & c_{ij}=\frac{1}{1-(1-\lambda_i)(1-\lambda_j)} \quad i=2,\hdots, n, \, j=2, \hdots, n .\nonumber
\end{eqnarray}

\end{theorem}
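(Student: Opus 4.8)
The plan is to solve the linear recursion explicitly, pass to the Laplacian-eigenvector coordinates, and then read off the second-moment structure from the whiteness of the driving noise. Starting from the zero initial condition and iterating $(\ref{eq:gauss})$ gives ${\bf x}[k]=\sum_{m=0}^{k-1}A^{k-1-m}B\,u[m]$. Since each $u[m]$ is an independent $N(0,1)$ variable and this is a fixed linear combination of them, ${\bf x}[k]$ — and hence ${\bf s}[k]=V^{-1}{\bf x}[k]$ — is jointly Gaussian with zero mean, which disposes of the first two assertions.

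Because the eigenvalues of $A$ are simple, $A=V\widehat\Lambda V^{-1}$ with $\widehat\Lambda=\mathrm{diag}(\widehat\lambda_1,\ldots,\widehat\lambda_N)$ and $W=V^{-1}$, so $V^{-1}A^{j}=\widehat\Lambda^{j}W$ and $V^{-1}A^{k-1-m}B=\widehat\Lambda^{k-1-m}(WB)$, where $WB$ is the $z$-th column of $W$, i.e.\ the vector with entries $w_{iz}$. Re-indexing the sum by $j=k-1-m$ yields ${\bf s}[k]=\sum_{j=0}^{k-1}{\bf g}[j]\,u[k-1-j]$ with $g_i[j]=\widehat\lambda_i^{\,j}w_{iz}$. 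Using $E(u[m]u[m'])=\delta_{mm'}$, the cross terms vanish and $\Sigma[k]=E({\bf s}[k]{\bf s}[k]^T)=\sum_{j=0}^{k-1}{\bf g}[j]{\bf g}[j]^T$, whose $(i,l)$ entry is $w_{iz}w_{lz}\sum_{j=0}^{k-1}(\widehat\lambda_i\widehat\lambda_l)^j$. Pulling out the diagonal factors identifies $\Sigma[k]=Q\,C[k]\,Q$ with $Q=\mathrm{diag}(w_{1z},\ldots,w_{Nz})$ and $C[k]_{il}=\sum_{j=0}^{k-1}(\widehat\lambda_i\widehat\lambda_l)^j$, exactly the structure claimed in $(\ref{eq:cov})$ up to evaluating the geometric sums.

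To finish, I would evaluate $C[k]_{il}$ case by case, using $\widehat\lambda_1=1$ and $|\widehat\lambda_i|=|1-\lambda_i|<1$ for $i\ge 2$ (which follows from $0<\lambda_i<2$, itself a consequence of irreducibility and aperiodicity). For $i=l=1$ the summand is $1$, so $C[k]_{11}=k$. For $i=1$, $l\ge 2$ (and symmetrically), $C[k]_{1l}=\frac{1-(1-\lambda_l)^k}{\lambda_l}\to\frac{1}{\lambda_l}$. For $i,l\ge 2$ we have $|\widehat\lambda_i\widehat\lambda_l|<1$, so $C[k]_{il}=\frac{1-((1-\lambda_i)(1-\lambda_l))^k}{1-(1-\lambda_i)(1-\lambda_l)}\to\frac{1}{1-(1-\lambda_i)(1-\lambda_l)}$. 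These limits are precisely the stated entries of $C$, so $Q\,C[k]\,Q\to Q\,C\,Q$ entrywise, giving the theorem.

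The computation itself is routine; the only delicate point is the meaning of ``the second moment approaches $QCQ$.'' The mode associated with $\widehat\lambda_1=1$ never settles — it behaves like a random walk — so the $(1,1)$ entry grows without bound, which is why $C$ legitimately retains an explicit $k$. The precise statement being proved is that every transient geometric correction in the remaining entries decays to zero as $k\to\infty$ while the $(1,1)$ entry equals $k$ for all $k$, so the ``hard part'' reduces to the elementary spectral estimate $|1-\lambda_i|<1$ for $i\ge 2$. (A nonzero deterministic initial condition would add a further persistent contribution to the $(1,1)$ entry through its projection onto the stationary mode and would also perturb the mean; the zero-initial-condition assumption is what keeps the mean zero and the formula clean.)
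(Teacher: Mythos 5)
Your proposal is correct and follows essentially the same route as the paper: diagonalize $A$ in the Laplacian-eigenvector basis, write ${\bf s}[k]$ as a convolution of the white input with the modal impulse responses, use whiteness to kill the cross terms, factor the resulting second moment as $Q C[k] Q$, and evaluate the geometric sums using $\widehat\lambda_1=1$ and $|\widehat\lambda_i|<1$ for $i\ge 2$. The only difference is presentational — you carry out the geometric-sum limits explicitly and flag the role of the zero initial condition and the unbounded $(1,1)$ entry, which the paper leaves implicit.
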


\begin{proof}
Substituting ${\bf x}[k]=V {\bf s}[k]$ into the state equation for the linear-consensus model yields:
${\bf s}[k+1]=\Lambda {\bf s}[k]+{\bf h} u[k]$,
where $\Lambda=diag(\widehat{\lambda}_i)$ and ${\bf h}=WB=\begin{bmatrix} w_{1z}\\ \vdots \\ w_{nz} \end{bmatrix}$.  From this transformed equation, ${\bf s}[k]$ can be expressed as:
${\bf s}[k]=\sum_{i=0}^{k-1} \Lambda^{k-1-i} {\bf h}{\bf u}[i]$.
From this expression, it is immediate that ${\bf s}[k]$ is zero mean.
Also, the second moment $\Sigma = E({\bf s}[k] {\bf s}[k]^T)$ can
be written as:
$\Sigma = \sum_{i=0}^{k-1} \sum_{j=0}^{k-1} \Lambda^{k-1-i}{\bf h}E({\bf u}[i]{\bf u}[i]^T ) {\bf h}^T \Lambda^{k-1-j}$.
Since ${\bf u}[k]$ is a zero-mean unit variance white process, $\Sigma$ can be further simplified to:
$\Sigma=\sum_{l=0}^{k-1} \Lambda^{k-1-l}{\bf h}{\bf h}^T \Lambda^{k-1-l}$.
Noticing that $\Lambda^{k-1-l}$ is diagonal, $\Sigma$ can be further simplified to: $\Sigma=Q CQ$, 
where $Q$ is defined in the theorem statement, and C=
$\sum_{l=0}^{k-1}\begin{bmatrix} \widehat{\lambda}_1^{k-1-l} \\ \vdots \\ \widehat{\lambda}_n^{k-1-l} \end{bmatrix} \begin{bmatrix} \widehat{\lambda}_1^{k-1-l} \\ \vdots \\ \widehat{\lambda}_n^{k-1-l} \end{bmatrix}^T$.  The entry of $C$ at row $i$ and column $j$ can be further simplified as 
$\sum_{l=0}^{k-1} (\widehat{\lambda}_i \widehat{\lambda}_j)^{k-1-l}$.  Noting that $\widehat{\lambda}_1=1$ while $\widehat{\lambda}_2,\hdots, \widehat{\lambda}_n$ are strictly within the unit circle, and considering large $k$, the expressions for the entries in $C$ in the theorem statement are recovered. $\blacksquare$

\end{proof}

Theorem \ref{th:1} shows that the components of the time-$k$ state ${\bf s}[k]$ in the Laplacian-eigenvector basis are zero-mean Gaussian random variables, with asymptotic covariance matrix given by Equation \ref{eq:cov}.  We are interested in  whether the energy in ${\bf s}[k]$ is concentrated in a small number $K$ of the largest entries, which would indicate compressibility.


The asymptotic expression for the second moment of ${\bf s}[k]$ gives insight into why the energy in ${\bf s}[k]$ is often concentrated in a small number of entries.  Specifically, the diagonal entries in the second-moment $\Sigma$ indicate the expected energies of each component in ${\bf s}[k]$.  Notice that the first diagonal entry in $\Sigma$, given by $w_{1z}^2 k$, is growing with $k$ while the remaining entries remain bounded.  Thus, the expected energy fraction along the first basis vector (the all-ones eigenvector ${\bf 1}$ of the Laplacian) approaches $1$ for large $k$.  Meanwhile, the remaining diagonal entries of $\Sigma$ take the form $\frac{w_{iz}^2}{1-(1-\lambda_i)^2}$.  These diagonal entries are much larger for $\lambda_i$ near $0$ or $2$ as compared to other values of $\lambda_i$, provided that the corresponding $w_{iz}^2$ are not vanishingly small.  For many network graphs, including the example considered above, the Laplacian has a relatively small set of eigenvalues near $0$, and the corresponding eigenvectors have wide support on the network.  Thus, the expected energy in the corresponding components of ${\bf s}[k]$ are amplified compared to the remaining components.  In an instantiation of the state, only a subset of these  components (basis directions) with large expected energy will have large amplitude in actuality.  Thus, the signal energy should be concentrated in a small, time-varying set of components in the Laplacian-eigenvector direction -- and thus the signal should be compressible.  If the location of the stochastic input is variable, the dominant components and their expected magnitudes will change (since $Q$ changes), but the state should remain compressible.

An exact analysis of the energy contained in the proposed $K$-sparse approximation is complicated, since it requires characterizing the order statistics of the entries in ${\bf s}[k]$. However, a simple lower bound on the energy can be found when the basis is orthonormal, based on the diagonal entries of $\Sigma$.  This bound is presented in the following  corollary:

\begin{corollary}
Consider the linear consensus model, and assume that the Laplacian matrix is symmetric (equivalently, the Laplacian-eigenvector basis $V$ is orthonormal).  Consider $K$-sparse approximations of ${\bf x}={\bf x}[k]$ in the Laplacian-eigenvector basis, for sufficiently large $k$.  Let $p(1),\hdots,p(K)$ be the indices for $K$ largest-magnitude diagonal entries of $\Sigma$.  Then the expected energy contained in the optimal $K$-sparse approximation is lower bounded by $\sum_{i=1}^{K} \sigma_{p(i),p(i)}$.  
\end{corollary}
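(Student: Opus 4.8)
The plan is to reduce the claim to an elementary pointwise inequality about the order statistics of the component vector ${\bf s}[k]=V^{-1}{\bf x}[k]$, and then take expectations using Theorem~\ref{th:1}. Note that only the second-moment conclusion of Theorem~\ref{th:1} is needed here, not Gaussianity.

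First I would record the structure of the optimal $K$-sparse approximation in this setting. Because the Laplacian is symmetric, $V$ is orthonormal, so $\|{\bf x}\|_2^2=\|{\bf s}[k]\|_2^2$ and, as already noted in Section~II, the optimal $K$-sparse approximation $\overline{\bf x}$ of ${\bf x}={\bf x}[k]$ in the basis $V$ is obtained by zeroing all but the $K$ largest-magnitude entries of ${\bf s}[k]$. Writing $T^{*}$ for this optimal support, orthonormality gives $\|{\bf x}\|_2^2=\|\overline{\bf x}\|_2^2+\|{\bf x}-\overline{\bf x}\|_2^2$, so the energy contained in the optimal approximation equals $\|\overline{\bf x}\|_2^2=\sum_{i\in T^{*}}s_i[k]^2=\max_{|T|=K}\sum_{i\in T}s_i[k]^2$, the sum of the $K$ largest squared components.

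Second, for any fixed index set $S$ with $|S|=K$ --- in particular $S=\{p(1),\dots,p(K)\}$, the indices of the $K$ largest diagonal entries of $\Sigma$ --- the pointwise bound $\max_{|T|=K}\sum_{i\in T}s_i[k]^2 \ \ge\ \sum_{j=1}^{K}s_{p(j)}[k]^2$ holds for every realization. Taking expectations and using linearity, $E\big(\|\overline{\bf x}\|_2^2\big)\ge\sum_{j=1}^{K}E\big(s_{p(j)}[k]^2\big)$, and each $E(s_i[k]^2)$ is the $i$th diagonal entry of $E({\bf s}[k]{\bf s}[k]^T)$, which by Theorem~\ref{th:1} tends to $\sigma_{ii}$ as $k\to\infty$. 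Hence for sufficiently large $k$ the expected captured energy is at least $\sum_{j=1}^{K}\sigma_{p(j),p(j)}$, which is the assertion.

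The main (and essentially the only) obstacle is the passage to the large-$k$ asymptote: Theorem~\ref{th:1} furnishes $\Sigma$ only as the limit of the exact second moment, so the inequality is exact in the limit and holds up to an $o(1)$ error at finite $k$ --- which is precisely what the hypothesis ``for sufficiently large $k$'' accommodates. It is also worth flagging that the bound is typically loose: it concedes the advantage of choosing the $K$ retained directions adaptively to each realization rather than fixing them to the components with largest expected energy, and quantifying that gap would require the joint order statistics of ${\bf s}[k]$, which this corollary deliberately sidesteps.
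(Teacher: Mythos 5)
Your proof is correct and follows essentially the same route as the paper's: the pointwise inequality that the sum of the $K$ largest squared entries of ${\bf s}[k]$ dominates the sum of the squared entries at the fixed indices $p(1),\hdots,p(K)$, followed by taking expectations and invoking the diagonal of $\Sigma$ from Theorem~\ref{th:1}. Your added care about orthonormality (which justifies identifying the captured energy with the sum of retained squared components) and about the finite-$k$ versus asymptotic second moment only makes explicit what the paper leaves implicit.
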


\begin{proof}
For a particular time-$k$ snapshot ${\bf x}={\bf x}[k]$, consider the optimal $K$-sparse approximation $\overline{\bf s}$ in the Laplacian-eigenvector basis.  The energy contained in the approximation is equal to $\sum_{i=1}^K s_{q(i)}^2$, where $s_{q(1)},\hdots, s_{q(K)}$ are the $K$ largest-magnitude entries in ${\bf s}$.  The expected energy across the ensemble of snapshots is therefore given by 
$E(\sum_{i=1}^K s_{q(i)}^2)$.  However, for each snapshot, 
notice that $\sum_{i=1}^K s_{q(i)}^2 \ge \sum_{i=1}^K s_{p(i)}^2$.
The lower bound in the corollary statement follows.
\end{proof}



Theorem 1 also suggests an alternate basis which can allow sparse representation using fewer basis vectors.  In particular, previous work has shown that sparse representations of Gaussian data are obtained in an whitening basis under certain circumstances \cite{whitening}, i.e. one where the components are statistically independent.  The second-moment expression for ${\bf s}[k]$ in Theorem 1 shows that the Laplacian-eigenvector basis is not a whitening basis, since the entries in ${\bf s}[k]$ are correlated.  A whitening basis can be found using an eigenvalue decomposition of $\Sigma=QCQ$.  In particular, noticing that $\Sigma$ is positive definite and symmetric, it follows that $\Sigma$ can be decomposed as $\Sigma = V^* D {V^*}^{-1}$, where the orthonormal matrix $V^*$ contains the eigenvectors of $\Sigma$, and $D$ is a diagonal matrix with its eigenvalues.  Thus, it follows that the basis $\Phi=V^{*} V$ is a whitening basis for ${\bf x}[k]$ for sufficiently large $k$, with the second moment or covariance matrix in the new basis equal to $D$.    

The whitening basis has been applied to the $200$-node example network described in Section III.A.  In the new basis, a good sparse approximation is achieved with very few basis vectors.  In particular, a two basis vectors capture $89\%$ of the signal energy, and three basis vectors capture $99\%$ of the signal energy, and four basis vectors capture $99.9\%$ of te signal energy. 

In the example, the effectiveness of sparse approximation in the whitening basis results from the fact that the diagonal entries in $D$ (the component variances in the whitening basis) decay exponentially: when ordered by magnitude, each entry is a small fraction of the previous.  Indeed, this exponential decay in the component variances is a more general property of the whitening basis, for the class of models considered here.  This can be seen by first recognizing that the covariance matrix $\Sigma$ in the Laplacian-eigenvector basis is a controllability Gramian matrix for a single-input system.  The entries in $D$, which are the component variances in the whitening basis, are the eigenvalues of this Gramian.  However, the eigenvalues of Gramians for single-input systems have been shown to exhibit an exponential falloff for a wide range of system parameters, using properties of Cauchy matrices \cite{penzl,antoulas}.  These results explain why very sparse approximations can be obtained in the whitening basis.

While the whitening basis is appealing in enabling very efficient sparse approximations, there are practical challenges to its use in analyzing network opinion and spread data.  Importantly, constructing the basis requires knowing the location of the external input, which is often unrealistic in real-world scenarios. In addition, because the eigenvalues of $\Sigma$ have greatly varying orders of magnitude, computing the corresponding eigenvectors and hence the basis represents a computational challenge.  For these reasons, the Laplacian-eigenvector basis, though suboptimal, is practical for processing opinion/spread data in many real-world circumstances.

\section{Voter Model}

Stochastic automata network models, wherein nodes hold discrete-valued opinions or statuses which evolve through probabilistic interactions, are also widely used to represent opinion evolution.  In particular, voter models -- in which agents stochastically poll neighors to update their states -- have been used to represent various decision-making and algorithmic processes \cite{voter1,ozdagler}.  Reflecting real-world voting processes, Voter-model dynamics are intrinsicially random, which complicates reduced-order modeling and data analysis.  In addition, the models are sometimes used to represent individual opinions within large populations, and hence may be high dimensional.  For these reasons, terse representations of voter-model data are of interest.

Formally, we consider a (discrete-time) voter model with $N$ nodes, each of which has a status of either $0$ or $1$ at each time step.  The evolution of the state ${\bf x}[k]$ at each time $k$ can be described in two stages:

\noindent {\em Stage 1:}  The vector ${\bf y}[k]$ is computed as ${\bf y}[k]=A {\bf x}[k]$, where $A$ is an $N \times N$ row-stochastic matrix.  We note that each entry in ${\bf y}[k]$ is in $[0,1]$.

\noindent {\em Stage 2:}  Each entry in the next state ${\bf x}[k+1]$ is generated from ${\bf y}[k]$, as follows.  The entry $x_i[k+1]$ is set to $1$ with probability $y_i[k]$, and is set to $0$ otherwise, independently of the updates of the other statuses.

The voter model is entirely defined by the row-stochastic matrix $A$, in analogy with the linear consensus model.  We define the graph for the model in the same way as for the linear consensus model, since this graph captures direct influences between nodes.

An empirical study of the compressibility of state snapshots is undertaken for the voter model.  The $200$-vertex graph and stochastic matrix $A$ defined in Section III.A is also used for the voter model. Two nodes in the network are set to maintain statuses of $0$ and $1$, respectively.  The remaining nodes update their statuses as described above.  It can be seen that the nodes' statuses persistently fluctuate between $0$ and $1$.  The model state exhibits considerable temporal variability, but also  exhibits spatial correlation which should allow for compression.

\begin{figure}[!htb]
\centering
\includegraphics[width=8.5cm,height=6cm]{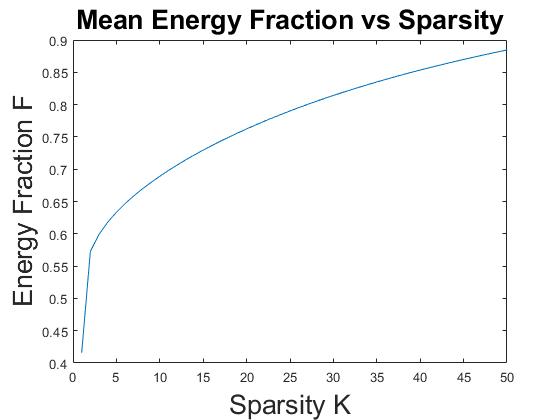}
\caption{For the voter model, the average energy fraction captured by a $K$-sparse approximation is plotted as a function of $K$.}
\label{fig:3}
\end{figure}

Compressibility of voter-model state snapshots in the Laplacian basis has been undertaken.  In particular, compressibility of the ensemble of voter-model states snapshots at a specific ($k=500$) has been studied.
The average energy fraction captured by an optimal $K$-sparse approximation  is plotted as a function of $K$ in \ref{fig:3}.  The approximation is seen to capture $75\%$ of the signal content for $K=20$, and $85 \%$ for $K=40$.  As with the linear consensus model, the dominant basis vectors vary significantly among the ensemble members, but a high energy fraction is achieved regardless of which basis vectors have large components.  While many different basis vectors may have large components, wide-area basis directions corresponding with small eigenvalues of the Laplacian are frequently dominant.

\begin{figure}[!htb]
\centering
\includegraphics[width=9cm,height=7cm]{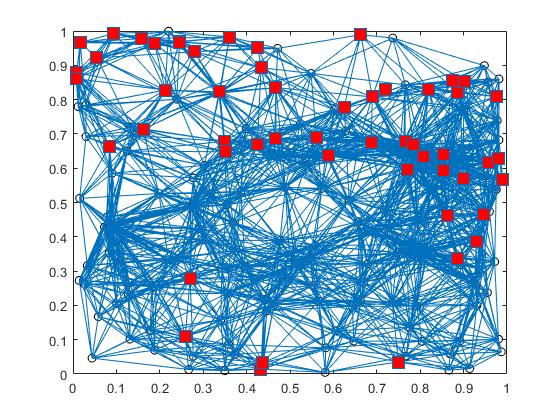}
\includegraphics[width=9cm,height=7cm]{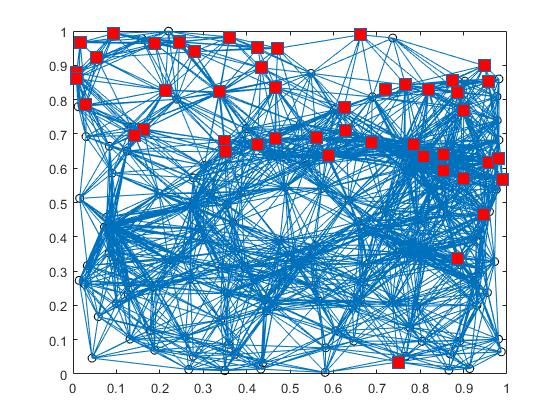}
\caption{Reconstruction of a discrete-valued voter model state using a $20$-sparse approximation is shown.  The original snapshot (top) and reconstruction (bottom) have an $89\%$ match.  (Red squares in the graph indicate statuses of `1'.}
\label{fig:4}
\end{figure}


It is instructive to see how effectively sparse approximations can recover the discrete state of the voter model.  To illustrate this, we have pursued reconstruction of the state snapshot from the sparse representation in one example, see Figure \ref{fig:4}.  Specifically, a sparse approximation has been computed using 20 basis vectors, and then translated to a discrete-state approximation by rounding.  The approximation matches the original voter model's state at 177 of the 200 nodes ($89\%$). It captures the state pattern well, while making errors at isolated nodes.  When 40 basis vectors are used, the approximation is correct at 193 of the 200 nodes.

\section{Daily COVID-19 Positivity Rates}

Compressibility of Coronavirus disease 2019 (COVID-2019) data was studied.  Specifically, daily state-level positivity rates for COVID-19 across the continguous United State (48 states + Washington DC) was examined, over a 250 day period.  The data were obtained from \cite{johnshopkins}.  The data for each day was viewed as a snapshot of the COVID spread state on a graph of the contiguous United States, with nodes representing the states and bidirectional edges indicating contiguous states.

\begin{figure}[!htb]
\centering
\includegraphics[width=8.5cm,height=6cm]{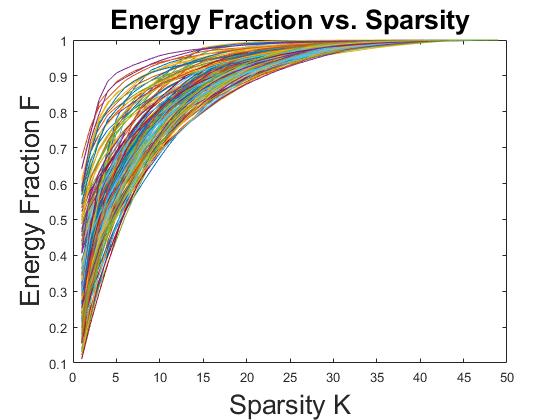}
\caption{For the COVID-19 positivity data, the optimal energy fraction achieved by a $K$-sparse approximation is shown as a function of $K$, for each day's data.}
\label{fig:5}
\end{figure}

The compressibility of the data in the Laplacian-eigenvector basis was examined.  As with the other examples, as $K$-sparse approximation was obtained by expressing the state snapshot in the (orthonormal) Laplacian eigenvector basis, and then maintaining the $K$ largest-magnitude entries while setting the remaining entries to $0$.  

Figure \ref{fig:5} shows the optimal energy fraction $F^*$ as a function of the sparsity $K$, for each state snapshot (day).  The plot shows that $10$ basis vectors are sufficient to capture $70-92\%$ of the energy (two-norm) in the state snapshot on every day, with an average optimal energy fraction of about $80\%$.  
Examination of the $K$-sparse approximations for each day shows that the dominant basis vectors are time-varying, showing gradual drifts  as well as periodic deviations caused by altered measurement protocols on weekends.

\begin{figure}[!htb]
\centering
\includegraphics[width=9.5cm,height=7cm]{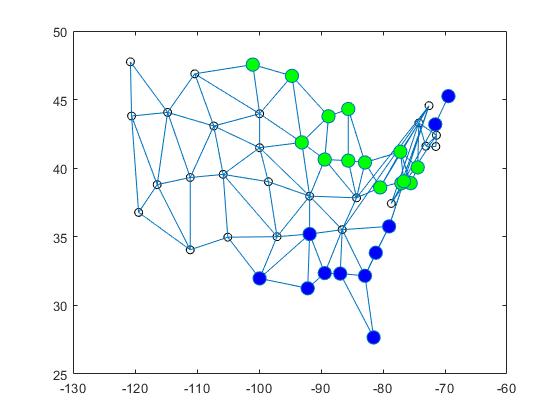}
\caption{A basis vector that has a large component on several days is plotted on top of the network graph. Blue and green circles are used to show strongly positive and strongly negative vector entries.  The basis vector  shows a gradient between the Norheastern and Southeastern United States.}
\label{fig:6}
\end{figure} 

Finally, we illustrate that plots of the dominant basis directions on top of the network graph can give intuition into the spread state.    In Figure \ref{fig:6}, one basis vector which has large-magnitude components on several days is plotted. The basis vector shows a gradient between the Northeastern and Southeastern United States, indicating that there is a broad difference in COVID positivity rates between the two regions on these days.

\section{Discussion and Possible Applications of Compressibility}

Our empirical and preliminary formal analyses indicate that snapshot data from network opinion-evolution and spread processes are compressible in the Laplacian-eigenvector basis. In this sense, network opinion and spread data are much like images and other scenes, but with the distinction that the compressive bases are tied to the network's topology.  The analogy suggests that alternate graph-related bases like graph wavelet bases may also yield sparse representations \cite{wavelet}.

The compressibility of opinion-evolution and spread process data may prove useful for several applications, such as:

1)  {\em State Reconstruction from Sparse Data}.   Reconstruction of opinion/spread processes from sparse measurements is of substantial interest.  Compressibility allows application of compressive-sensing techniques for reconstructing the state from sparse measurements.  In particular, for an $N$-dimensional state that is (approximately) $K$-sparse in a known basis, it is known that the full state can be recovered using on the order of $K \log (N/K)$ measurements under broad conditions \cite{compressive1}.  Furthermore, this can be done in a computationally appealing way using ${\cal L}^1$ optimization approaches.  

2)  {\em Sparse Control Design.}  The design of controls to manage opinion dynamics and mitigate spread is also of interest. Compressibility potentially may allow for sparse control designs, where actuation or influence is applied at only a small number of network nodes.

3) {\em Process Visualization.}  Compressibility enables approximation of an opinion/spread state in terms of a small number of Laplacian eigenvectors.  The Laplacian eigenvectors, when plotted on the network's graph, have wave-like shapes with different spatial frequencies.  Plotting the dominant eigenvector components for a particular state snapshot can thus assist in visualizing the snapshot.

\end{document}